\newtheorem{theorem}{Theorem}[section]
\newtheorem{lemma}[theorem]{Lemma}
\theoremstyle{definition}
\newtheorem{definition}[theorem]{Definition}
\newtheorem{example}[theorem]{Example}
\newtheorem{remark}{Remark}
\numberwithin{equation}{section}
\newcommand{\N}{\mathbb{N}}
\newcommand{\R}{\mathbb{R}}
\newcommand{\C}{\mathbb{C}}
\newcommand{\s}{\mathbb {S}}
\newcommand{\E}{\mathbb{E}}
\newcommand{\K}{\mathcal {K}}
\newcommand{\Ordo}{\mathcal {O}}
\begin{document}

\date{\today}

\address{D. Mustafa and B. Wennberg\\Department of Mathematics\\Chalmers University of Technology and the University of Gothenburg\\412 96 G\"OTEBORG\\SWEDEN}

\email{dawan@chalmers.se and wennberg@chalmers.se}


\keywords{}


\author{D. Mustafa and B. Wennberg}

\title[Chaotic distributions for relativistic particles]{Chaotic distributions for relativistic particles}

\begin{abstract}
We study a modified Kac model where the classical kinetic energy is replaced
by an arbitrary energy function $\phi(v)$, $v \in \R$. The aim of this paper
is to show that the uniform density with respect to the microcanonical
measure is $Ce^{-z_0\phi(v)}$-chaotic, $C,z_0 \in \R_+$. The kinetic energy
for relativistic particles is a special case. A generalization to the case
$v\in \R^d$ which involves conservation momentum is also formally discussed.
\end{abstract}

\maketitle
\section{Introduction}
\label{sec:intro}%
In 1956, Mark Kac published a paper \cite{kac11}, in which he answered some
fundamental questions concerning the derivation of the spatially homogen\-eous
Boltzmann equation. Kac considered a stochastic model consisting of $N$
identical particles, and obtained an equation like the spatially homogeneous
Boltzmann equation as a mean-field limit when the number of particles tends to
infinity. The key ingredient in his paper was the notion of \emph{chaos}, which
goes back to Boltzmann's {\em stosszahlansatz}. Loosely speaking, chaos is
related to asymptotic independence. We give a precise mathematical definition
later. He showed that if the probability distribution of the initial state of
the particles is chaotic then this property is propagated in time, i.e., the
probability distribution at time $t>0$ is also chaotic. This is referred to as
propagation of chaos.

Kac's model is a jump process on $\s^{N-1}=\{ (v_1,...,v_N) \in \R^N\,|\,
v_1^2+\cdots +v_N^2=N\}$. This corresponds to an $N$-particle system where a
particle is represented by its velocity $v\in\R$, and the kinetic energy of a
particle is given by $v^2$. The total energy is conserved in this process,
which is ergodic and hence has a unique invariant measure. In the original
model, the stationary distribution is the uniform measure on $\s^{N-1}$. It is
easy to prove that this uniform measure is $(2\pi)^{-1/2} e^{-v^2/2}$-chaotic
according to the definition given below, and Kac also provides a method of
constructing other families of chaotic measures on $\s^{N-1}$ which may serve as
suitable initial distributions for the process.

In this paper we consider a generalization of Kac's model, where the
energy of a particle is given by a large class of functions $\phi(v)$,
which includes $\phi(v)=v^2$. The motivation for studying this problem
comes from the theory of special relativity, where $v$ represents the
momentum of a particle rather than its velocity, and where the kinetic
energy is given  by
\begin{equation*}
\phi(v)=\sqrt{1+v^2}-1.
\end{equation*}
The kinetic theory for relativistic particles is far less developed
than for classical particles, but there is a growing interest in this
kind of problems (see below for references), and this paper is a
contribution to this work.

The first result of this paper is a proof that the uniform distribution on the
mani\-fold represented by the total kinetic energy of the particles (depends on
$\phi$) with respect to the microcanonical measure is chaotic by using the
approach of Kac. We also consider a jump process on this manifold, which is a
direct generalization of Kac's model, and prove that propagation of chaos holds
for this process. We also consider a generalization to a model where
the particles have momenta $v\in \R^d$.

One of the most influential equations in the mathematical kinetic theory of
gases is the Boltzmann equation, which describes the time evolution of the
density of a single particle in a gas consisting of a large number of
particles.

The Boltzmann equation is

\begin{equation}\label{hbe}
\begin{cases}
\frac{\partial}{\partial t}f(x,v,t)+\tilde{v}\cdot\nabla_x f(x,v,t) =Q(f,f)(x,v,t), \hspace{0.1in}v\in \R^3,t>0, \\
f(v,0)=f_0,
\end{cases}
\end{equation}
where in the classical case
$\tilde{v}=v$, and in the relativistic case
$\tilde{v}=v/\sqrt{1+|v|^2}$. The collision
operator on the right hand side is a quadratic operator acting only in the velocity
variables. In the classical case it is given by
\begin{equation*}
Q(f,f)=\int_{\R^3}\int_{\s^2}\left( f(v',t)f(v'_*,t)-f(v,t)f(v_*,t)\right)B(v-v_*,\sigma)dv_*d\sigma.
\end{equation*}

The function $B$ is called the collision kernel and is derived from the physics of the model. The corresponding collision term is considerably more
complicated in the relativistic case, which is thoroughly
investigated in~\cite{StrainYun2014}, for example. In both cases,
$(v,v_*)$ and $(v',v_*')$ represent the velocities of a pair of
particles before and after an elastic collision, and satisfy the
conservation of momentum,
\begin{equation}
     v+v_* = v'+v_*'\,,
\end{equation}
and energy
\begin{equation}
  \phi(|v|) + \phi(|v_*|) =   \phi(|v'|) + \phi(|v_*'|)\,.
\end{equation}

The derivation of the Boltzmann equation from a classical (Newtonian) many
particle system is still an important research topic. The classical result in
this direction is by Lanford \cite{lanford111}, however, this result is valid
only over a fraction of the mean free time between collisions. Up to this date
the result by Lanford has only been improved upon only in details, see e.g.,
\cite{pss} and the recent book~\cite{Gallgher_etal2013}. For the study of the
relativistic Boltzmann equation, see e.g., \cite{relbook}.

To obtain an equation like (\ref{hbe}) as a mean-field limit, Kac
considered the \emph{master equation}, which describes the evolution of the probability density on state space.
Assuming that velocities of the particles is distributed according to a probability
density $F_0$ initially, its time evolution is given by the following
linear master equation
\begin{equation}\label{mastereq_Kac}
\begin{cases}
\frac{\partial}{\partial t}F_N(v_1,\dots,v_N,t)=\K F_N(v_1,\dots,v_N,t), \\
F_N(v_1,\dots,v_N,0)=F_0(v_1,\dots,v_N).
\end{cases}
\end{equation}
where $\K$ is the generator of a Markov jump process,
\begin{equation*}
\K F_N(v_1,\dots,v_N)=N [Q-I]F_N(v_1,\dots,v_N),
\end{equation*}
with $I$ being the identity operator and
\begin{equation*}
QF_N(v_1,\dots,v_N)=\frac{2}{N(N-1)}\sum_{i<j}\int_{-\pi}^{\pi}
  F_N(v_1,\dots,v_i'( \theta),\dots,v_j'(\theta),\dots,v_N)\frac{d\theta}{2\pi}.
\end{equation*}
Hence $\K$ can be written as a sum of generators acting on only two
variables, $v_i$ and $v_j$, say, which corresponds to  binary collisions in a
real gas. The pair of velocities $v_i'(\theta)$, $v_j'(\theta)$ is the outcome
of the velocities $v_i$, $v_j$ undergoing a collision, and in the classical
case, as proposed by Kac they are given by
\begin{equation}\label{colvelKac}
v_i'(\theta)=v_i\cos \theta + v_j \sin \theta , \hspace{0.2in} v_j'(\theta)=-v_i \sin \theta + v_j\cos \theta .
\end{equation}
The energy of a pair of particles is always conserved in a collision, i.e.,
$$v_i'(\theta)^2+v_j'(\theta)^2=v_i^2+v_j^2.$$
In a realistic model the momentum should also be conserved, but since the
particles have one-dimensional velocities, imposing conservation of momentum
would lead to either the particles keeping the velocities during a collision or
exchanging velocities. Momentum conservation is therefore sacrificed in this
case.

The fact that particles are assumed to be indistinguishable corresponds to the
initial distribution  $F_0$ being symmetric with respect to permutations, and
this property is preserved by $\mathcal{K}$, so that  $F_N$ is symmetric for
all $t>0$.

In order to obtain a mean-field limit of the master equation
(\ref{mastereq_Kac})  as $N\rightarrow \infty$, Kac introduced the notion of
chaos (in \cite{kac11}, he referred to it as the Boltzmann property).
\begin{definition}\label{def_chaos}
Let $f$ be probability density on $\R$. For each $N \in \N$, let $F_N$ be a
probability density on $\R^N$ with respect to a measure $m^{(N)}$. The sequence
$\{F_N\}_{N \in \N}$ of probability densities on $\R^N$ is said to be
$f-$chaotic if the following two conditions are satisfied:
\begin{enumerate}
\item Each $F_N$ is a symmetric function of $v_1,\dots,v_N$.
\item For each fixed $k\in \N$, the $k$-th marginal $f_k^N(v_1,\dots,v_k)$ of
    $F_N$ converges to $\prod_{i=1}^{k}f(v_i)$, as $N\rightarrow \infty$,
    where $f(v)=\lim_{N\rightarrow \infty}f_1^N(v)$. The convergence is to be
    taken in the weak sense, that is, if $\varphi(v_1,\dots,v_k)$ is a
    bounded continuous function of $k$ variables, $v_1,\dots,v_k\in \R$, then
\begin{equation}
\begin{split}
&\lim_{N\rightarrow \infty}\int_{\R^N}\varphi(v_1,\dots,v_k) F_N(v_1\dots,v_N) dm^{(N)}\\
&\hspace{2cm}=\int_{\R^k}\varphi(v_1,\dots,v_k) \prod_{i=1}^{k}f(v_i) dv_1 \dots dv_k.
\end{split}
\end{equation}
\end{enumerate}
\end{definition}
More generally, this can be formulated for probability densities on products
$E^N$ where $E$ may be an arbitrary Polish space. The setting of this paper is
also slightly different from the Definition 1.1 because the dynamics take place
on a submanifold of $R^N$.

A chaotic family of probability densities on $\s^{N-1}(\sqrt{N})$ is the following:
\begin{example}\label{exchaoticsph}
It is a well known fact that the surface area of the sphere $\s^{N-1}(\sqrt{E})$ in $\R^N$ is given by
\begin{equation*}
|\s^{N-1}(\sqrt{E})|=\frac{2 \pi^{\frac{N}{2}}E^{\frac{N-1}{2}}}{\Gamma \left(\frac{N}{2} \right )}.
\end{equation*}
Let
$$F_N(v_1,\dots,v_N)=\frac{1}{|\s^{N-1}(\sqrt{N})|}$$
be the symmetric uniform density on $\s^{N-1}(\sqrt{N})$ with respect to
surface measure $\sigma^{(N)}$. Let $\varphi$ be continuous function on $\R$.
It is easy to see that
\begin{equation*}
\int_{\s^{N-1}(\sqrt{E})} \varphi(v_1)d \sigma^{(E)}=\int_{-\sqrt{E}}^{\sqrt{E}}
      \varphi(v_1)\left |\s^{N-2}\left (\sqrt{E-v_1^2}\right )\right| d v_1.
\end{equation*}
Replacing $E$ with $N$ we have
\begin{equation*}
\begin{split}
 \displaystyle \lim_{N\rightarrow \infty}&\frac{\int_{\s^{N-1}(\sqrt{N})}\varphi(v_1)
  d \displaystyle \sigma^{(N)}}{|\s^{N-1}(\sqrt{N})|}\\
  &= \lim_{N\rightarrow \infty} \frac{\Gamma \left(\frac{N}{2} \right )}{\pi^ {\frac{1}{2}}N^{\frac{1}{2}} \Gamma \left(\frac{N-1}{2} \right )}
  \int_{-\sqrt{N}}^{\sqrt{N}}\varphi(v_1)\left (1- \frac{v_1^2}{N}\right )^{\frac{N-3}{2}}d v_1 \\
  &=\frac{1}{\sqrt{2\pi}}\int_{-\infty}^{\infty} \varphi(v_1) e^{-\frac{v_1^2}{2}} dv_1.
\end{split}
\end{equation*}
Taking $\varphi$ to be a function on  $\R^k$, the same calculations show that, the family of uniform densities
on $\s^{N-1}(\sqrt{N})$, with respect to the surface measure, is $\frac{1}{\sqrt{2\pi}}e^{-\frac{v^2}{2}}$ -chaotic.
\end{example}
Kac showed by using a combinatorial argument that the master equation (\ref{mastereq_Kac}) propagates chaos, that is, if $\{F_N(v_1,\dots,v_N,0)\}_{N \in \N}$ is
$f_0-$chaotic, then the solutions to (\ref{mastereq_Kac}), $\{F_N(v_1,\dots,v_N,t)\}_{N \in \N}$ is $f(v,t)-$chaotic. The density $f(v,t)$ satisfies the Boltzmann-Kac equation
\begin{equation}
\begin{split}
\frac{\partial}{\partial t}f(v,t)&=2\int_{\R}\int_{-\pi}^{\pi}\left[f(v(\theta),t)f(w(\theta),t)-f(v,t)f(w,t)\right]\frac{d\theta}{2\pi}dw, \\
f(v,0)&=f_0,
\end{split}
\end{equation}
with $v(\theta),w(\theta)$ given by (\ref{colvelKac}).

Kac described also a method to construct other chaotic probability
densities on $\s^{N-1}(\sqrt{N})$: Let
\begin{equation}\label{chaoticdist}
F_N(v_1,\dots,v_N)=\frac{\prod_{i=1}^{N} f(v_i)}{Z(\sqrt{N})},
\end{equation}
where
\begin{equation}
Z(\sqrt{E})=\int_{\s^{N-1}(\sqrt{E})}\prod_{i=1}^{N} f(v_i)d\sigma^{(E)},
\end{equation}
with $\sigma^{(E)}$ being the surface measure on $\s^{N-1}(\sqrt{E})$. Kac showed that,
under some conditions on the function $f$, the family of probability densities
$\{F_N\}_{N \in \N}$ is $h(v)-$chaotic, where
\begin{equation}
h(v)=\frac{e^{-z_0v^2}f(v)}{\int_{\R}e^{-z_0v^2}f(v)dv},
\end{equation}
and $z_0$ is a positive constant. Note that the case $f(v)=1$ corresponds to the previous example.

The purpose of the present article is to study chaotic probability densities on
other surfaces than $\s^{N-1}(\sqrt{N})$. Define
\begin{equation}\label{spaceOmega}
\Omega^{N-1}(\sqrt{E})=\left \{(v_1,\dots,v_N)\; | \; \sum_{i=1}^{N}\phi(v_i)=E \right \},
\end{equation}
where $\phi(v)$ represents the energy of a particle with velocity $v$, and
\begin{eqnarray}\label{phivillkor}
  \phi &\in& C^1(\R,\R^+),\qquad\qquad \phi \quad \mbox{is even and convex},\nonumber\\
  \phi(0)&=&0\,.
\end{eqnarray}
We show that the uniform density on $\Omega^{N-1}(\sqrt{N})$ with respect
to the microcanonical measure  is $Ce^{-z_0\phi(v)}-$chaotic, where $C$ is a
normalisation constant and $z_0>0$ is the unique solution to a specific
equation.

Chaotic probability densities on spheres have been considered
by many authors. In a paper by Carlen, Carvalho, Le Roux, Loss and Villani,
\cite{cclrv}, the authors show the following theorem, which differs from Kac
result both in the exact definition of chaos (the notion of entropic chaos is
introduced), and in the method of proof:
\begin{theorem}
Let $f$ be a probability density on $\R$ such that $f\in L^{p}(\R)$ for some
$p>1$, $\int_{\R}v^2f(v)dv=1$ and $\int_{\R}v^4f(v)dv<\infty$. Let a family of
densities $\{ F_N\}_{N\in \N}$, be defined with $F_N$ as in(\ref{chaoticdist})
with $\sigma^{(N)}$  now being the normalized surface measure on
$\s^{N-1}(\sqrt{N})$. Then  $\{ F_N\}_{N\in \N}$
 is $f$-chaotic.
\end{theorem}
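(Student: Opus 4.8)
The plan is to compute the $k$-th marginal of $F_N$ explicitly and then identify its $N\to\infty$ limit by a local central limit theorem. Condition (1) of Definition~\ref{def_chaos} is automatic, since $\prod_{i=1}^N f(v_i)$ and the surface measure $\sigma^{(N)}$ are permutation invariant. For condition (2), fix $k$, let $X_1,X_2,\dots$ be i.i.d.\ with density $f$, and let $\rho_m$ denote the density of $\sum_{i=1}^m X_i^2$. The first step is to recognize $F_N$ as the conditional law of $(X_1,\dots,X_N)$ given $\{\sum_{i=1}^N X_i^2=N\}$: disintegrating $\prod_{i=1}^N f(v_i)\,dv$ along $v\mapsto|v|^2$ gives, on $\s^{N-1}(\sqrt N)$, a density proportional to $\prod f(v_i)\big/\bigl|\nabla(|v|^2)\bigr|=\prod f(v_i)/(2\sqrt N)$, i.e.\ exactly $F_N$. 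Splitting $\sum_{i=1}^N X_i^2=\sum_{i=1}^k v_i^2+V$ with $V=\sum_{i=k+1}^N X_i^2$ independent of $(X_1,\dots,X_k)$ and of density $\rho_{N-k}$, the usual formula for conditioning on the value of a sum then yields
\begin{equation}\label{eq:marginalformula}
f_k^N(v_1,\dots,v_k)=\Bigl(\prod_{i=1}^k f(v_i)\Bigr)\,\frac{\rho_{N-k}\!\bigl(N-\sum_{i=1}^k v_i^2\bigr)}{\rho_N(N)};
\end{equation}
the same identity also drops out of iterating the slicing computation of Example~\ref{exchaoticsph} $k$ times and rewriting the resulting ratio of surface integrals as a ratio of densities via the coarea formula.

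The task then reduces to showing the ratio in \eqref{eq:marginalformula} tends to $1$ for a.e.\ $(v_1,\dots,v_k)$. Set $Y_i=X_i^2-1$, so by hypothesis $\E[Y_i]=\int v^2f-1=0$ and $\tau^2:=\operatorname{Var}(Y_i)=\int v^4f-1<\infty$; moreover $\tau^2>0$ since $X_i^2$ cannot be a.s.\ constant (that would force $f$ to be a combination of two Dirac masses, impossible for $f\in L^p$, $p>1$). With $\tilde\rho_m$ the density of $\sum_{i=1}^m Y_i$ we have $\rho_m(t)=\tilde\rho_m(t-m)$, hence $\rho_N(N)=\tilde\rho_N(0)$ and $\rho_{N-k}(N-\sum v_i^2)=\tilde\rho_{N-k}(k-\sum_{i=1}^k v_i^2)$. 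The next step is to invoke the local central limit theorem in its density (Gnedenko) form,
\[
\sup_{a\in\R}\Bigl|\sqrt m\,\tilde\rho_m(a)-\tfrac1{\sqrt{2\pi\tau^2}}\,e^{-a^2/(2m\tau^2)}\Bigr|\longrightarrow0\qquad(m\to\infty),
\]
which gives $\rho_N(N)=N^{-1/2}\bigl((2\pi\tau^2)^{-1/2}+o(1)\bigr)$, while for fixed $k$ and fixed $v_1,\dots,v_k$ the argument $(k-\sum v_i^2)/\sqrt{N-k}\to0$, so $\rho_{N-k}(N-\sum v_i^2)=(N-k)^{-1/2}\bigl((2\pi\tau^2)^{-1/2}+o(1)\bigr)$. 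Dividing, the ratio in \eqref{eq:marginalformula} converges to $1$, hence $f_k^N\to\prod_{i=1}^k f$ pointwise a.e.

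Finally I would upgrade this to the weak convergence demanded by Definition~\ref{def_chaos}: $f_k^N$ and $\prod_{i=1}^k f$ are probability densities on $\R^k$ (the former because it is a marginal of the probability density $F_N$), and pointwise a.e.\ convergence of one sequence of probability densities to another forces convergence in $L^1(\R^k)$ by Scheffé's lemma, so $\int_{\R^k}\varphi\,f_k^N\,dv\to\int_{\R^k}\varphi\prod_{i=1}^k f(v_i)\,dv$ for every bounded continuous $\varphi$. This is condition (2), so $\{F_N\}_{N\in\N}$ is $f$-chaotic. (One may also view this as the borderline case of Kac's construction recalled above, in which the normalization $\int v^2 f=1$ forces the tilting parameter $z_0$ to vanish and hence $h=f$.)

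The step I expect to be the main obstacle is the local central limit theorem itself. The density of $Y_i=X_i^2-1$ equals $\bigl(f(\sqrt{1+y})+f(-\sqrt{1+y})\bigr)/(2\sqrt{1+y})$ for $y>-1$, and thus carries an integrable but unbounded $(1+y)^{-1/2}$ singularity at $y=-1$, so the most elementary versions of the local CLT do not apply as stated. The hypothesis $f\in L^p(\R)$ for some $p>1$ is precisely what is needed to ensure that a sufficiently high convolution power of this density lies in $L^\infty$, which then legitimizes Gnedenko's theorem; carrying this out with explicit rates, as in \cite{cclrv}, is moreover what yields entropic chaos.
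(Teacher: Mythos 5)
The paper does not prove this theorem: it states it and attributes it to Carlen, Carvalho, Le~Roux, Loss, and Villani \cite{cclrv}, remarking explicitly that their method of proof differs from Kac's. Your sketch reproduces essentially the CCLRV argument: identify $F_N\,d\sigma^{(N)}$ as the conditional law of $(X_1,\dots,X_N)$ given $\sum X_i^2=N$, write the $k$-th marginal as $\prod_{i\le k}f(v_i)\,\rho_{N-k}\bigl(N-\sum_{i\le k}v_i^2\bigr)/\rho_N(N)$, apply Gnedenko's density local CLT (with the $L^p$ hypothesis supplying the bounded convolution power needed to tame the $(1+y)^{-1/2}$ singularity of the law of $X_1^2-1$), and upgrade pointwise to weak convergence via Scheff\'e. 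This is a genuinely different route from the one the present paper takes for its own Theorem~\ref{theoremchaos}, where the authors represent $Z_\phi(\sqrt N)$ as a Bromwich contour integral of $e^{NS(z)}$, locate the real saddle $z_0$ through the convexity of the log-Laplace transform, and invoke Laplace's method (Lemma~\ref{lemmaZN}). The saddle-point route is Kac's original one and handles tilted measures $\prod f(v_i)/Z$ provided $f$ has a tractable Laplace transform and suitable growth; the LCLT route trades that analyticity for the moment and $L^p$ hypotheses in the statement, which is precisely the point of \cite{cclrv}, and in addition yields quantitative rates and entropic chaos.

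One small inaccuracy: positivity of $\tau^2=\operatorname{Var}(X_1^2)$ does not require $f\in L^p$; it already follows from $f$ being a probability density on $\R$, since $\tau^2=0$ would force $X_1$ to be supported on at most two points and hence not absolutely continuous. The $L^p$ hypothesis enters exactly where you say it does, namely to guarantee that some convolution power of the density of $X_1^2-1$ is bounded so that Gnedenko's theorem applies; it is not needed for nondegeneracy of the variance.
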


Sznitman~\cite{sznitman111} constructed chaotic families of measures
using a somewhat different approach: Let $h_1,\dots,h_N$
be i.i.d random variables with law $\mu(dh)=f(h)dh$, where $h\in \R^d$. Assume
that $f$ is differentiable and satisfies the following condition
\begin{equation}\label{conditionflambda}
\int_{\R^d} (f(h)+|\nabla f(h)|)e^{\lambda |h|} dh < \infty,
\end{equation}
where $\lambda\in \R$. Then, the conditional distribution $\mu^N$ of
$(h_1,\dots, h_N)$ subject to the constraint $\frac{h_1+\dots+h_N}{N}=a$ is
$\Upsilon-$chaotic, where
\begin{equation}\label{eq:sz14}
\Upsilon =\frac{1}{Z_{\lambda_a}}e^{\lambda_a \cdot h}\mu(dh),
\end{equation}
 with $\lambda_a$ determined by the equation $\int h \, d\Upsilon
(h)=a$ and $Z_\lambda$ is a normalisation constant. Note that, the choice of
$\lambda$ for which condition (\ref{conditionflambda}) is satisfied depends on
$a$. Within the framework of this paper we think of the random variable $h\in
\R$ as the energy of a particle, in this case $h=\phi(v) \geq 0$.

The organization of this paper is as follows. In section $2$ we define the
\emph{micro\-canonical} measure on $\Omega^{N-1}(\sqrt{N})$ and show that the
uniform density on \\
$\Omega^{N-1}(\sqrt{N})$ with respect to this measure is chaotic. In
Section~$3$, we introduce a modified Kac model with particles having the energy
$\phi(v)$ and the corresponding master equation. We also show that this master
equation propagates chaos and obtain the limiting equation. In Section~$4$, we
discuss how to generalize the results in Section~$2$ to the case $v\in \R^d$,
$d>1$, in which case the momentum is also conserved. An Appendix is also
included to introduce the methods that are used in the paper.
\section{Chaotic measures}
\label{sec:chaotic}%
 This section is devoted to showing that the family of
uniform probability densities on $\Omega^{N-1}(\sqrt{N})$ with respect to
the microcanonical measure is chaotic. By Example \ref{exchaoticsph}, another
approach would be show that the uniform density on
$\Omega^{N-1}(\sqrt{N})$ with respect to surface measure is chaotic. However,
since we consider a particle system where the total energy is conserved it is
natural to use the microcanonical measure on $\Omega^{N-1}(\sqrt{N})$. We show
that the family of uniform probability densities on
$\Omega^{N-1}(\sqrt{N})$ with respect to the microcanonical measure is
$Ce^{-z_0\phi(v)}$-chaotic where $z_0>0$ is the solution to a specific equation
and $C$ is a normalisation constant.

The microcanonical measure on $\Omega^{N-1}(\sqrt{N})$ is defined as follows: Let
\begin{equation}
H(v_1,\dots,v_N)=\sum_{i=1}^{N}\phi(v_i).
\end{equation}
\begin{definition}
Provided that $|\nabla H| \neq 0$, the microcanonical measure, $\eta^{(E)}$, on $\Omega^{N-1}(\sqrt{E})$, is defined by
\begin{equation}
\eta^{(E)}=\frac{\sigma_{\Omega}}{|\nabla H|},
\end{equation}
where $\sigma_{\Omega}$ is the surface measure induced by the Euclidean measure in $\mathbb{R}^N$ on $\Omega^{N-1}(\sqrt{E})$.
\end{definition}

The microcanonical measure arises naturally in physics from the assumption of equal probability, meaning
essentially that all microscopic particle configurations corresponding to the same energy
are equally probable, see {\em e.g.}~\cite{lrs} or~\cite{TKS1992}.

A more geometrical approach is given  by the coarea formula,
as explained {\em e.g.} in~\cite{EvansGariepy2015}:
If $\Phi: \R^n\rightarrow\R^m$ is Lipschitz continuous and $n\ge m$, then for
each measurable set $A\subset \R^n$,
\begin{equation}
    \int_A J \Phi(x) dx = \int_{\R^m} \mathcal{H}^{n-m}( A\cap \Phi^{-1}(y))\,dy\,,
\end{equation}
where $J \Phi $ is the Jacobian of $\Phi$, as defined
in~\cite{EvansGariepy2015}, and $\mathcal{H}^{n-m}$ denotes the
$n-m$-dimensional Hausdorff measure. In our setting this can be rephrased as
(see Theorem 3.13 in~\cite{EvansGariepy2015}): if $H : \R^n\rightarrow\R$ is
Lipschitz continuous and $\mathrm{ess\; inf} |\nabla H|>0$, and
$g:R^n\rightarrow \R$ is integrable, then for almost all $E$,
\begin{equation}
  \frac{d}{d E} \int_{\R^N} g(x)\chi_{\{H(x)>E\}}\,dx =
     \int_{\{H=E\}}
   \frac{g}{|\nabla H|} d\mathcal{H}^{n-1}\,.
\end{equation}
where $\chi$ is the indicator function of a set. Because
$H(v_1,...,v_N)=\phi(v_1)+\cdots\phi(v_N)$ with $\phi$ convex and diffentiable,
the set $\{H=E\}$ is also convex and regular. Then
\begin{equation*}
\begin{split}
&\int_{\R^N}g(v_1,\dots,v_N) \delta(H(v_1,\dots,v_N)-E)dv_1 \dots dv_N \\
&=\int_{\R} \int_{\Omega^{N-1}(\sqrt{y})}g\delta(y-E)\frac{d\sigma_{\Omega}}{|\nabla H|}dy
=\int_{\Omega^{N-1}(\sqrt{E})}g\frac{d\sigma_{\Omega}}{|\nabla H|}.
\end{split}
\end{equation*}
This shows that
$$\delta(H(v_1,\dots,v_N)-E)=\frac{\sigma_{\Omega}}{|\nabla H|}.$$

Another concept that is relevant in this context is that of disintegration of a
measure, which is a measure theoretic approach to almost the same problem, and
a means of approaching conditional probabilities, see {\em
e.g.}~\cite{ChangPollard1997}. In our setting, the disintegration theorem
states that given a measure $d\mu$ on $\R^N$ there are a measure $\nu$ on
$\R^+$ and a family of measures $\mu_{E}$ on the level surfaces $\{
(v_1,...,v_N) \; | \; \sum \phi(v_j) = E\} $, such that
\begin{equation}
\int_{\R^N} f(y) d\mu(u) = \int_{0}^{\infty} \int_{\{ \sum \phi(v_j)=E\} } f(y) d\mu_{E}(y) d\nu(E)\,,
\end{equation}

\begin{remark}
On $\s^{N-1}(\sqrt{E})$, we have that $|\nabla H|=2\sqrt{E}$. This implies that
the microcanonical measure is up to a constant factor equal to the surface
measure on $\s^{N-1}(\sqrt{E})$.
\end{remark}
Using the equality $H(v_1, \dots ,v_N)=E$, we can express, at least locally, the variable $v_N$ as function of $v_1,\dots,v_{N-1}$:
$$v_N=U(v_1,\dots,v_{N-1}).$$
By this parametrization, the surface $\Omega^{N-1}(\sqrt{E})$ can be represented as the graph of $U:\R^{N-1}\rightarrow \R$. The surface measure $\sigma_{\Omega}$ on $\Omega^{N-1}(\sqrt{E})$ is now given by
$$d\sigma_{\Omega}=\sqrt{1+|\nabla U|^2} \ dv_1 \dots dv_{N-1}.$$
By the implicit function theorem it follows that
$$\frac{\partial U}{\partial v_k}=
-\frac{\frac{\partial H}{\partial v_k}}{\frac{\partial H}{\partial v_N}}\hspace{0.2in}k=1,\dots,N-1.$$
Thus
\begin{equation*}
\frac{d\sigma_{\Omega}}{|\nabla H|}=\frac{1}{\left |\frac{\partial H}{\partial v_N}\right |}dv_1 \dots dv_{N-1}.
\end{equation*}

To carry out integration on $\Omega^{N-1}(\sqrt{E})$ with respect to the microcanonical measure $\eta^{(E)}$ we use the last equality:
\begin{equation*}
\begin{split}
&\displaystyle \int_{\Omega^{N-1}(\sqrt{E})}g(v_1,\dots,v_N)d\eta^{(E)}\\
&\hspace{2cm}=\sum_{\epsilon =+,-}\int_{\sum_{i=1}^{N-1}\phi(v_i)\leq E}
      g(v_1,\dots,\epsilon v_N)
      \frac{1}{\left |\frac{\partial H}{\partial v_N}\right |_{\epsilon}}dv_1 \dots dv_{N-1},
\end{split}
\end{equation*}
where
\begin{equation*}
  v_N= \phi^{-1}\left(E-\sum_{i=1}^{N-1}\phi(v_i)\right),
\end{equation*}
and $\phi^{-1}(v)$ is the of $\phi(v)$, $v\geq 0$. Moreover,
\begin{equation*}
\left |\frac{\partial H}{\partial v_N}\right |_{\epsilon}=
\left |\frac{\partial H}{\partial v_N}\left(v_1,\dots,v_{N-1},\epsilon\phi^{-1}
  \left(E-\sum_{i=1}^{N-1}\phi(v_i)\right)\right)\right |.
\end{equation*}

The uniform density $F(v_1, \dots ,v_N)$ on $\Omega^{N-1}(\sqrt{N})$ with
respect to the microcanonical measure $\eta^{(N)}$ is given by
\begin{equation}\label{distonomega}
F_N(v_1,\dots,v_N)=\frac{1}{Z_\phi(\sqrt{N})},
\end{equation}
where
\begin{equation*}
Z_\phi(\sqrt{E})=\int_{\Omega^{N-1}(\sqrt{E})}d\eta(E) .
\end{equation*}
To show that the uniform density on $\Omega^{N-1}(\sqrt{N})$ with respect
to the microcanonical measure $\eta^{(N)}$ is $C e^{-z_0 \phi(v)}$-chaotic we
follow Kac \cite{kac11}, and start by determining the asymptotic behaviour of
\begin{equation*}
Z_\phi(\sqrt{E})=\displaystyle \sum_{\epsilon =+,-} \int_{\sum_{i=1}^{N-1}\phi(v_i)
  \leq E}\frac{1}{\left |\frac{\partial H}{\partial v_N}\right |_{\epsilon}} dv_1 \dots dv_{N-1}
\end{equation*}
with $E=N$ for large $N$. Since
\begin{equation*}
\left|\frac{\partial H}{\partial v_N}\right |_{\epsilon}=|\phi'(v_N)| \hspace{0.1in} \text{and}\hspace{0.1in}
v_N=\pm \phi^{-1}\left (E-\sum_{i=1}^{N-1}\phi(v_i) \right ),
\end{equation*}
we have
\begin{equation*}
Z_\phi(\sqrt{E})=\displaystyle 2 \int_{\sum_{i=1}^{N-1}\phi(v_i)\leq E} \frac{1}{\left |\phi'(\phi^{-1}(E-\sum_{i=1}^{N-1}\phi(v_i))) \right |}dv_1 \dots dv_{N-1}.
\end{equation*}
To write $Z_\phi(\sqrt{E})$ as an integral over the sphere $\s^{N-1}(\sqrt{E})$, we make the change of variables $y_i^2=\phi(v_i)$ with respect to sign of $v_i$, $i=1,\dots,N-1$. This leads to

\begin{equation}\label{ZNchangevar}
\begin{split}
Z_\phi&(\sqrt{E})=\\
&2^{N} \displaystyle \int_{\sum_{i=1}^{N-1}y_i^2 \leq E }
\frac{1}{\left |\phi'(\phi^{-1}(E-\sum_{i=1}^{N-1}y_i^2 ))\right |}
\prod_{i=1}^{N-1}\frac{|y_i|}{|\phi'(\phi^{-1}(y_i^2))|} dy_1 \dots dy_{N-1}.
\end{split}
\end{equation}
Let
\begin{equation}\label{functioncy}
f(y):=\frac{|y|}{|\phi'(\phi^{-1}(y^2))|}.
\end{equation}
The integrand in (\ref{ZNchangevar}) is almost a product of $N$ copies of $f(y)$. Multiply and divide the integrand by
$|y_N|=\sqrt{E-\sum_{i=1}^{N-1}y_i^2}$. Recall the following formula for integration over a sphere
\begin{equation*}
\begin{split}
&\int_{\s^{N-1}(E)}g(y_1,\dots,y_N)d\sigma^{(E^2)}\\
&\hspace{2cm}=\sum_{\epsilon=+,-}
\int_{\sum_{i=1}^{N-1}y_i^2 \leq E^2}g(y_1,\dots,\epsilon y_N)\frac{E dy_1\dots dy_{N-1}}{\sqrt{E^2-\sum_{i=1}^{N-1}}y_i^2}.
\end{split}
\end{equation*}
We now get
\begin{equation}\label{ZNkac}
Z_\phi(\sqrt{E})= \frac{2^{N-1}}{\sqrt{E}} \int_{\s^{N-1}(\sqrt{E})} \prod_{i=1}^{N}f(y_i) d\sigma^{(E)}.
\end{equation}

Having $Z_\phi$ given by (\ref{ZNkac}) is convenient in sense that, in \cite{kac11}, Kac determined the asymptotic behaviour of $Z_\phi(\sqrt{N})$ for large $N$ by using the \textit{saddle point} method (see e.g. \cite{rvg}). For completeness, we present each step of the result with rigorous justification with $f(y)$ given by (\ref{functioncy}). A short description of the saddle point method is given in the Appendix.

We start by computing the Laplace transform of $E \mapsto Z_\phi(\sqrt{E})$. The Laplace transform of $Z_\phi(\sqrt{E})$ is defined provided that $Z_\phi(\sqrt{E})$ grows at most exponentially. Since the behaviour of $Z_\phi(\sqrt{E})$ depends on the function $f(y)$ defined by (\ref{functioncy}) we assume that $\phi(y)$ is such that
\begin{equation}\label{expcondf}
f(y)\leq Ke^{b y^2},
\end{equation}
for some $K\geq 0$ and $b>0$. This condition ensures that $Z_\phi(\sqrt{E})$ grows at most exponentially.

Taking the Laplace transform of $Z_\phi(\sqrt{E})$, making the change of variable $r=\sqrt{E}$,
we have, for $w\in \C$  where $\Re(w)>b$
\begin{equation*}
\int_{0}^{\infty} e^{-wE}Z_\phi(\sqrt{E})dE = 2\int_{0}^{\infty} e^{-wr^2}r Z_\phi(r)dr.
\end{equation*}
Using (\ref{ZNkac}), the last equality equals
\begin{equation*}
2^{N} \left ( \int_{-\infty}^{\infty} e^{-wy^2}f(y) dy\right)^N.
\end{equation*}
From condition (\ref{expcondf}) and since $\Re(w)>b$ it follows that
\begin{equation}\label{defofPhi}
\Phi(w):=\int_{-\infty}^{\infty} e^{-wy^2}f(y) dy
\end{equation}
is an analytic function of $w$ for $\Re(w)>b$. By applying the inverse of the Laplace transform, we get
\begin{equation*}
Z_\phi(\sqrt{E})=\frac{2^{N}}{2\pi i} \int_{\gamma-i\infty}^{\gamma+i\infty}
 e^{zE}\left(\int_{-\infty}^{\infty}e^{-zy^2}f(y)dy \right)^N dz,
\end{equation*}
where $\gamma>b$ is such that the line $\gamma=\Re(z)$ lies in the half-plane where $\Phi(z)$ is analytic. Replacing $E$ with $N$, we get
\begin{equation*}
Z_\phi(\sqrt{N})=\frac{2^{N-1}}{\pi i} \int_{\gamma-i\infty}^{\gamma+i\infty}
 \left( e^{z} \int_{-\infty}^{\infty}e^{-zy^2}f(y)dy \right)^N dz.
\end{equation*}
By comparing with the saddle point integral (\ref{saddlepint}) in the Appendix, we set
\begin{equation}\label{defofS}
q(z)=1 \hspace{0.1in}\text{and} \hspace{0.1in} S(z)=z+\log \Phi(z),
\end{equation}
which is well defined because $\Phi(z)$ is nonzero on the line $\gamma=\Re(z)$. We now have
\begin{equation}\label{ZNsaddle}
Z_\phi(\sqrt{N})= \frac{2^{N-1}}{\pi i} \int_{\gamma-i\infty}^{\gamma+i\infty}e^{NS(z)}dz.
\end{equation}
The asymptotic behaviour of $Z_\phi(\sqrt{N})$ for large $N$ is determined by the saddle points of $S(z)$. The next lemma concerns the saddle points of S(z).
\begin{lemma}\label{lemmaZN}
Assume that there exists a $\gamma>0$ such that
\begin{equation}\label{assump0}
\int_{-\infty}^{\infty} e^{-\gamma y^2}f(y) dy < \infty.
\end{equation}
Moreover, assume that
\begin{equation}\label{assump1}
\int_{-\infty}^{\infty}(1-y^2)f(y)dy<0,
\end{equation}
and
\begin{equation}\label{assump2}
\int_{|y|\leq 1}(1-y^2)f(y)dy > 0.
\end{equation}
Let S(z) be given by (\ref{defofS}) with $\Phi(z)$ by (\ref{defofPhi}). For $\Re(z)\geq \gamma$, the function $S(z)$ is analytic and there exists a unique saddle point $z_0$ to $S(z)$ such that $z_0$ is real, $z_0 \geq \gamma$ and $S''(z_0)>0$. Moreover
\begin{equation}\label{asympZN}
Z_\phi(\sqrt{N})\sim \frac{2^{N-1}e^{Nz_0}}{\sqrt{NS''(z_0)}} \left ( \int_{-\infty}^{\infty} e^{-z_0 y^2} f(y) dy \right )^N.
\end{equation}
\end{lemma}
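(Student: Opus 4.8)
The plan is to analyse $S$ first on the real axis, where a probabilistic reformulation makes the saddle point transparent, and then feed that information into the saddle point method of the Appendix.

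For real $x$ with $x\ge\gamma$ the integral \eqref{defofPhi} converges by \eqref{assump0}, and since $f\ge 0$ and $\Phi(x)>0$ the formula $d\mu_x(y)=\Phi(x)^{-1}e^{-xy^2}f(y)\,dy$ defines a probability measure on $\R$. Differentiating \eqref{defofPhi} under the integral sign (justified by \eqref{assump0} and dominated convergence) gives $\Phi'(x)=-\Phi(x)\,\E_{\mu_x}[y^2]$ and $\Phi''(x)=\Phi(x)\,\E_{\mu_x}[y^4]$, hence $S'(x)=1-\E_{\mu_x}[y^2]$ and
\[
  S''(x)=\frac{\Phi''(x)\Phi(x)-\Phi'(x)^2}{\Phi(x)^2}=\operatorname{Var}_{\mu_x}(y^2).
\]
By \eqref{assump2} the function $f$ is positive on a subset of $(-1,1)$ of positive measure, so $y\mapsto y^2$ is non-constant $\mu_x$-a.s.\ and $S''(x)>0$; thus $S$ is strictly convex along the real axis and $S'$ is strictly increasing there, so $S$ has at most one real critical point, necessarily a non-degenerate one.

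For existence and location I would track the sign of $S'(x)=\Phi(x)^{-1}\int_{\R}(1-y^2)e^{-xy^2}f(y)\,dy$. As $x\to+\infty$ the substitution $y=u/\sqrt{x}$ with the continuity and positivity of $f$ at $0$ gives $\E_{\mu_x}[y^2]=O(1/x)$, so $S'(x)\to1>0$. Near the left end of the real interval on which $\Phi$ is finite, \eqref{assump1} makes the negative contribution of $\{|y|>1\}$ dominate the finite contribution of $\{|y|\le1\}$ controlled by \eqref{assump2}, so $S'$ is negative there. Strict monotonicity of $S'$ and the intermediate value theorem then produce a unique real $z_0\in[\gamma,\infty)$ with $S'(z_0)=0$, and $S''(z_0)=\operatorname{Var}_{\mu_{z_0}}(y^2)>0$. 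Analyticity of $S$ on $\{\Re z>\gamma\}$ (and hence in a neighbourhood of the real segment where the contour of \eqref{ZNsaddle} lives) follows from differentiation under the integral in \eqref{defofPhi} together with $\Phi\neq0$ near that segment; since $S''(z_0)$ is real and positive, the steepest-descent direction through $z_0$ is vertical, matching the contour of \eqref{ZNsaddle}.

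Finally, to obtain \eqref{asympZN} I would apply the saddle point formula \eqref{saddlepint} to \eqref{ZNsaddle}. First shift the line $\Re z=\gamma$ to $\Re z=z_0$: this is legitimate because $S$ is analytic in the strip between, and writing $z=x+it$ one has $|e^{NS(x+it)}|=e^{Nx}|\Phi(x+it)|^N\to0$ as $|t|\to\infty$ uniformly in $x$ on the strip, by the Riemann--Lebesgue lemma applied to $\Phi$ regarded as a Fourier transform in the variable $u=y^2$, so the horizontal sides of the connecting rectangle vanish in the limit. Then localise near $z_0$: the key point is that $z_0$ is a strict maximum of $|\Phi|$ along $\Re z=z_0$, since $|\Phi(z_0+it)|=|\int_{\R}e^{-z_0y^2}f(y)e^{-ity^2}\,dy|\le\Phi(z_0)$ with equality only if $e^{-ity^2}$ is $f$-a.e.\ constant, impossible for $t\neq0$ because $f>0$ on a set of positive measure; with Riemann--Lebesgue this gives $\sup_{|t|\ge\delta}|\Phi(z_0+it)|<\Phi(z_0)$ for each $\delta>0$, so the part of the contour with $|t|\ge\delta$ is exponentially smaller than the main term. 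On $|t|\le\delta$, the expansion $S(z_0+it)=S(z_0)-\tfrac12 S''(z_0)t^2+O(t^3)$ and the Gaussian integral $\int e^{-\frac12 NS''(z_0)t^2}\,dt=\sqrt{2\pi/(NS''(z_0))}$ give the stated asymptotic, with $e^{NS(z_0)}=e^{Nz_0}\Phi(z_0)^N$ accounting for the product in \eqref{asympZN}. I expect the main obstacle to be precisely this contour-and-tail analysis --- in particular the uniform decay of $|\Phi(x+it)|$ over the strip and the $t$-integrability of $|\Phi(z_0+it)|^N$, which in general rests on the mild regularity of $e^{-z_0y^2}f(y)$ (for the $f$ of \eqref{functioncy} the $u^{-1/2}$ behaviour at $u=0$ yields $|\Phi(z_0+it)|=O(|t|^{-1/2})$, which suffices) --- rather than the elementary saddle point algebra.
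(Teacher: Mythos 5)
Your proof is correct and follows essentially the same route as the paper's: it reduces to a one-dimensional saddle point problem on the real axis, establishes uniqueness of $z_0$ via strict monotonicity of $S'$, gets $S''(z_0)>0$ by a Jensen/variance argument, and then shifts and localises the contour using the Riemann--Lebesgue lemma applied to $\Phi$ regarded as a Fourier transform. The only cosmetic differences are that the paper packages the monotonicity through an auxiliary function $A(\xi)=e^\xi\Phi(\xi)S'(\xi)$ with $A'>0$, $A(0)<0$, $A(\xi)\to+\infty$, whereas you argue directly on $S'$ via the identity $S''(x)=\operatorname{Var}_{\mu_x}(y^2)$ and the scaling $y=u/\sqrt{x}$ to get $S'(x)\to 1$; these are interchangeable. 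You also add a useful remark that the paper glosses over, namely that the $t$-integrability of $|\Phi(z_0+it)|^N$ along the vertical contour follows from the $u^{-1/2}$ singularity of the transformed integrand, giving $|\Phi(z_0+it)|=O(|t|^{-1/2})$ and hence convergence for $N\ge 3$, which is indeed needed to justify the final step of the saddle point expansion. One small point to firm up if you write this out in full: your statement that $S'$ is negative ``near the left end'' of the interval of finiteness of $\Phi$ should be tied explicitly to assumption \eqref{assump1}, as the paper does with $A(0)<0$, rather than left as a qualitative claim.
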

\begin{proof}
Note that, for $z=\xi+i\eta$, for all $\xi$
\begin{equation*}
\arg\max_\eta |e^{\xi+i\eta}\Phi(\xi+i\eta)|=\{0\}.
\end{equation*}
Hence, we only need to find saddle points on the real line.\\ \\
\textbf{Claim 1}: The function $S(z)$ has a unique saddle point $z_0$ where $z_0\geq \gamma$.
\begin{proof}
For $\Re(z)\geq \gamma$, we have
\begin{equation}\label{eqSprim}
S'(z)=1-\frac{\int_{-\infty}^{\infty}y^2 e^{-zy^2}f(y)dy}{\int_{-\infty}^{\infty}e^{-zy^2}f(y)dy}.
\end{equation}
For $z=\xi+i0$ where $\xi \geq \gamma$, the equation $S'(\xi)=0$ is equivalent to
\begin{equation*}
\int_{-\infty}^{\infty}e^{-\xi y^2}(1-y^2)f(y)dy=0.
\end{equation*}
Multiplying the last equality by $e^{\xi}$, we see that, $S'(\xi)=0$ is equivalent to
\begin{equation*}
\int_{-\infty}^{\infty}e^{-\xi (y^2-1)}(1-y^2)f(y)dy=0.
\end{equation*}
Let
\begin{equation*}
A(\xi)=\int_{-\infty}^{\infty}e^{-\xi (y^2-1)}(1-y^2)f(y)dy.
\end{equation*}
Since
\begin{equation*}
A'(\xi)=\int_{-\infty}^{\infty}e^{-\xi (y^2-1)}(1-y^2)^2f(y)dy>0,
\end{equation*}
it follows that $A(\xi)$ is an increasing function of $\xi$. Moreover, by (\ref{assump1}) we have
\begin{equation*}
A(0)<0.
\end{equation*}
Note that
\begin{equation*}
\lim_{\xi \rightarrow \infty}\int_{|y|> 1}e^{-\xi (y^2-1)}(1-y^2)f(y)dy=0.
\end{equation*}
Hence
\begin{eqnarray*}
&&\int_{-\infty}^{\infty}e^{-\xi (y^2-1)}(1-y^2)f(y)dy \sim \int_{|y|\leq 1}e^{-\xi (y^2-1)}(1-y^2)f(y)dy \nonumber\\
&&=\int_{|y|\leq 1}e^{\xi (1-y^2)}(1-y^2)f(y)dy \nonumber.
\end{eqnarray*}
The last integral goes to infinity by (\ref{assump2}) as $\xi\rightarrow \infty$. Hence, there exists a unique
$z_0 \geq \gamma$ such that $S'(z_0)=0$.
\end{proof}
\textbf{Claim 2}: The second derivative of $S(z)$ at $z_0$ is positive.
\begin{proof}
We have
\begin{equation*}
S''(z)=\frac{\Phi''(z)}{\Phi(z)}-\frac{\Phi'(z)^2}{\Phi(z)^2}.
\end{equation*}
For $z=z_0$, using the Jensen inequality, we get
\begin{equation*}
\frac{\Phi''(z_0)}{\Phi(z_0)}=\frac{\int_{-\infty}^{\infty}y^4 e^{-z_0 y^2}f(y)dy}{\int_{-\infty}^{\infty} e^{-z_0 y^2}f(y)dy} >
\left ( \frac{\int_{-\infty}^{\infty}y^2 e^{-z_0 y^2}f(y)dy}{\int_{-\infty}^{\infty} e^{-z_0 y^2}f(y)dy}\right )^2=\frac{\Phi'(z_0)^2}{\Phi(z_0)^2}.
\end{equation*}
This proves the claim.
\end{proof}
We now turn to the proof of (\ref{asympZN}). We can write
\begin{equation*}
\Phi(z)= \int_{-\infty}^{\infty} e^{-z y^2}f(y)dy = \int_{0}^{\infty} e^{-z y^2}(f(y)+f(-y))dy.
\end{equation*}
By a change of variables, the last integral equals
\begin{equation*}
\int_{0}^{\infty}e^{-z y} \frac{f(\sqrt{y})+f(-\sqrt{y})}{2\sqrt{y}}dy.
\end{equation*}
Let $z=\xi+i\eta$, with $\xi \geq \gamma$. The last integral can be written as
\begin{equation*}
\int_{-\infty}^{\infty}e^{-\xi y} \frac{f(\sqrt{y})+f(-\sqrt{y})}{2\sqrt{y}} \mathbbm{1}_{y\geq 0}\hspace{0.05in} e^{-i \eta y}dy.
\end{equation*}
The last integral is the Fourier transform of the function $\tilde{\Phi}_{\xi}$ at the point $\eta$, where
\begin{equation*}
\tilde{\Phi}_{\xi}(y)=e^{-\xi y} \frac{f(\sqrt{y})+f(-\sqrt{y})}{2\sqrt{y}} \mathbbm{1}_{y\geq 0}.
\end{equation*}
For $\xi=z_0$, it follows that $\tilde{\Phi}_{\xi}\in L^1(\R)$, and $|\mathcal{F}(\tilde{\Phi}_{\xi})(\eta)|<\mathcal{F}(\tilde{\Phi}_{\xi})(0)$. Moreover, By the Riemann Lebesgue lemma, it follows that
\begin{equation*}
|\mathcal{F}(\tilde{\Phi}_{\xi})(\eta)|\rightarrow 0 \hspace{0.1in} \hbox{when} \hspace{0.1in} |\eta| \rightarrow \infty.
\end{equation*}
Let $C_{+T}$ and $C_{-T}$ be the curves in the complex plane given $[\gamma+ iT, z_0+iT]$ and $[\gamma-iT,z_0-iT]$, respectively.

We have
\begin{eqnarray*}
  \left| \int_{C_{\pm T}} e^{N S(z)}dz \right| &=&  \left| \int_{\gamma}^{z_0}e^{N(\xi+ iT)} \mathcal{F}(\tilde{\Phi}_{\xi})(T)^N d\xi \right|  \nonumber \\
   &\leq& \int_{\gamma}^{z_0} e^{N\xi} |\mathcal{F}(\tilde{\Phi}_{\xi})(T)|^N d\xi \nonumber \\
   &\leq& |z_0-\gamma|\max_{\xi \in [\gamma,z_0]} e^{N \xi} |\mathcal{F}(\tilde{\Phi}_{\xi})(T)|^N \rightarrow 0,
   \hspace{0.02in}\hbox{as}\hspace{0.01in} |T|\rightarrow \infty. \nonumber \\
\end{eqnarray*}
By Cauchy's theorem, we can deform the contour in (\ref{ZNsaddle}) to a contour passing through the saddle point $z_0$.
Hence
\begin{equation*}
Z_\phi(\sqrt{N})=\frac{2^{N-1}}{\pi i} \int_{z_0-i\infty}^{z_0+i\infty}e^{NS(z)}dz.
\end{equation*}
By the saddle point method, we obtain
\begin{eqnarray}
Z_\phi(\sqrt{N})&=&\frac{2^{N-1}}{\pi i} \sqrt{\frac{-2\pi}{NS''(z_0)}}  \left(1 + \Ordo \left(\frac{1}{N}\right )\right) e^{NS(z_0)} \nonumber \\
&\sim& \frac{2^{N}e^{Nz_0}}{\sqrt{NS''(z_0)}} \left ( \int_{-\infty}^{\infty} e^{-z_0 y^2} f(y) dy \right )^N.
\end{eqnarray}
This finishes the proof of the lemma.
\end{proof}
The main theorem of this section is:
\begin{theorem}\label{theoremchaos}
Let $f(y)$ defined by (\ref{functioncy}) satisfy the conditions
(\ref{expcondf}), (\ref{assump1}) and (\ref{assump2}). Then, the family of
uniform densities on~$\Omega^{N-1}(\sqrt{N})$ with respect to the
microcanonical measure is $Ce^{-z_0 \phi(v)}-$ chaotic. The positive constant
$z_0$ is the unique real solution to the following equation
\begin{equation*}
\int_{-\infty}^{\infty}(1-\phi(v))e^{-z_0\phi(v)}dv=0,
\end{equation*}
and $C$ is a normalisation constant given by
\begin{equation*}
C=\displaystyle\frac{1}{\int_{-\infty}^{\infty}e^{-z_0\phi(v)}dv}.
\end{equation*}
\end{theorem}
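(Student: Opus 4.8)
The plan is to follow Kac's route in \cite{kac11}: condition on $k$ of the variables, integrate out the remaining $N-k$, and read off the limit from the asymptotics of the partition function established in Lemma~\ref{lemmaZN}. Symmetry of the uniform density \eqref{distonomega} is immediate since it is constant. For the marginals I would use the identity $\eta^{(E)}=\delta(H-E)$ proved above: writing $Z_{\phi}^{(m)}(\sqrt{E})=\int_{\R^{m}}\delta\big(\sum_{i=1}^{m}\phi(v_{i})-E\big)\,dv_{1}\cdots dv_{m}$ (so that $Z_{\phi}^{(N)}(\sqrt{E})=Z_{\phi}(\sqrt{E})$), and integrating $v_{k+1},\dots,v_{N}$ out of $\int_{\Omega^{N-1}(\sqrt{N})}\varphi(v_{1},\dots,v_{k})\,d\eta^{(N)}/Z_{\phi}(\sqrt{N})$, the $k$-th marginal of $F_{N}$ with respect to Lebesgue measure is
\begin{equation*}
f_{k}^{N}(v_{1},\dots,v_{k})=\frac{Z_{\phi}^{(N-k)}\big(\sqrt{N-s}\big)}{Z_{\phi}^{(N)}(\sqrt{N})},\qquad s:=\sum_{i=1}^{k}\phi(v_{i}),
\end{equation*}
understood to be $0$ when $s>N$. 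Note that $f_{k}^{N}$ is itself a probability density on $\R^{k}$, since $\int_{\R^{k}}f_{k}^{N}=\int_{\Omega^{N-1}(\sqrt{N})}F_{N}\,d\eta^{(N)}=1$.

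Next I would compute $\lim_{N\to\infty}f_{k}^{N}$. Inverting the Laplace transform exactly as in the derivation of \eqref{ZNsaddle}, but with $m=N-k$ particles and energy $N-s$, gives
\begin{equation*}
Z_{\phi}^{(N-k)}\big(\sqrt{N-s}\big)=\frac{2^{N-k}}{2\pi i}\int_{\gamma-i\infty}^{\gamma+i\infty}e^{z(N-s)}\Phi(z)^{N-k}\,dz=\frac{2^{N-k}}{2\pi i}\int_{\gamma-i\infty}^{\gamma+i\infty}e^{-zs}\,\Phi(z)^{-k}\,e^{NS(z)}\,dz,
\end{equation*}
with the \emph{same} $S(z)=z+\log\Phi(z)$, hence the same unique real saddle point $z_{0}$. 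The horizontal segments $C_{\pm T}$ still vanish as $|T|\to\infty$ (the estimate in the proof of Lemma~\ref{lemmaZN} only improves, since $|e^{z(N-s)}\Phi(z)^{N-k}|=e^{\xi(N-s)}|\mathcal{F}(\tilde{\Phi}_{\xi})(T)|^{N-k}\to 0$ for $N-k\ge 1$), so Cauchy's theorem lets me push the contour to $\Re z=z_{0}$ and apply the saddle point method with amplitude $q(z)=e^{-zs}\Phi(z)^{-k}$. The Gaussian prefactor $(NS''(z_{0}))^{-1/2}e^{NS(z_{0})}$ and all powers of $2$ except $2^{-k}$ coincide with those of $Z_{\phi}^{(N)}(\sqrt{N})$ in \eqref{asympZN}, so they cancel in the ratio, leaving
\begin{equation*}
f_{k}^{N}(v_{1},\dots,v_{k})\;\xrightarrow[N\to\infty]{}\;\frac{e^{-z_{0}s}}{(2\Phi(z_{0}))^{k}}=\prod_{i=1}^{k}\frac{e^{-z_{0}\phi(v_{i})}}{2\Phi(z_{0})}.
\end{equation*}
It remains to identify the constants. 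The change of variables $t=y^{2}=\phi(v)$ (for $v\ge 0$) already used to pass to \eqref{ZNkac} turns $2\Phi(z_{0})=2\int_{\R}e^{-z_{0}y^{2}}f(y)\,dy$ into $\int_{\R}e^{-z_{0}\phi(v)}\,dv$, i.e. $2\Phi(z_{0})=1/C$, and turns the saddle equation $S'(z_{0})=0$, equivalently $\int_{\R}(1-y^{2})e^{-z_{0}y^{2}}f(y)\,dy=0$, into $\int_{\R}(1-\phi(v))e^{-z_{0}\phi(v)}\,dv=0$; existence and uniqueness of such a $z_{0}>0$ is precisely Lemma~\ref{lemmaZN}, once one checks that conditions \eqref{expcondf}, \eqref{assump1}, \eqref{assump2} for $f$ are the same substitution applied to the corresponding statements for $\phi$. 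Hence $f_{k}^{N}\to\prod_{i=1}^{k}Ce^{-z_{0}\phi(v_{i})}$ pointwise on $\R^{k}$, and in particular $f(v):=\lim_{N}f_{1}^{N}(v)=Ce^{-z_{0}\phi(v)}$.

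Finally, to obtain the weak convergence of marginals demanded by Definition~\ref{def_chaos}, I would invoke Scheff\'e's lemma: $f_{k}^{N}$ and $g_{k}:=\prod_{i=1}^{k}Ce^{-z_{0}\phi(v_{i})}$ are probability densities on $\R^{k}$ with $f_{k}^{N}\to g_{k}$ pointwise, hence $f_{k}^{N}\to g_{k}$ in $L^{1}(\R^{k})$, which tested against any bounded continuous $\varphi$ gives exactly the convergence in Definition~\ref{def_chaos}; this dispenses with constructing an explicit dominating function. The substantive part of the argument is therefore the asymptotic analysis of $Z_{\phi}^{(N-k)}(\sqrt{N-s})$, and the one genuine obstacle I expect there is the same one already underlying Lemma~\ref{lemmaZN} — making the saddle point estimate rigorous (absolute integrability along $\Re z=z_{0}$ and control of the tails via the Riemann--Lebesgue decay of $\mathcal{F}(\tilde{\Phi}_{\xi})$); once that is granted, the crucial structural point is simply that the displacement of the energy from $N$ to $N-s$ and of the number of variables from $N$ to $N-k$ both land in the \emph{fixed} amplitude $q(z)=e^{-zs}\Phi(z)^{-k}$ rather than perturbing $S$, after which the cancellation in the ratio — and hence the value of the limiting constant $1/(2\Phi(z_0))^k=C^k$ — is forced. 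The remaining steps (the marginal identity, the change of variables for the constants, and checking that $f_{k}^{N}$ is a probability density) are bookkeeping.
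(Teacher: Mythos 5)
Your proposal is correct and follows essentially the same route as the paper's proof: the marginal is reduced to a ratio involving the partition function, and the asymptotics come from the saddle-point analysis of Lemma~\ref{lemmaZN}, with the crucial observation (shared with the paper) that shifting both the number of variables from $N$ to $N-k$ and the energy level from $N$ to $N-s$ perturbs only the amplitude, so the phase $S(z)=z+\log\Phi(z)$ and hence the saddle $z_0$ are unchanged, and the prefactors cancel in the ratio.

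The one genuine difference is organizational rather than mathematical. The paper integrates the test function $\varphi$ directly into the saddle-point amplitude, writing the numerator $\int_{\Omega^{N-1}(\sqrt N)}\varphi\, d\eta^{(N)}$ as a contour integral with amplitude $q(z)=\bigl(\int_{\R^k}\varphi\prod e^{-zy_i^2}f(y_i)\,dy\bigr)\Phi(z)^{-k}$ and reads off the weak limit in one stroke. You instead compute the pointwise limit of the density $f_k^N(v_1,\dots,v_k)=Z_\phi^{(N-k)}(\sqrt{N-s})/Z_\phi^{(N)}(\sqrt N)$ for each fixed $(v_1,\dots,v_k)$, with amplitude $q(z)=e^{-zs}\Phi(z)^{-k}$, and then upgrade to the weak convergence required by Definition~\ref{def_chaos} via Scheff\'e's lemma. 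This is a clean way to avoid checking any uniformity of the saddle-point estimate in $\varphi$ or constructing a dominating function, and it fills in a step the paper leaves as ``a modification of Lemma~\ref{lemmaZN}''. Your checks — the marginal identity $f_k^N=Z_\phi^{(N-k)}(\sqrt{N-s})/Z_\phi^{(N)}(\sqrt N)$, the tail estimate on the horizontal segments $C_{\pm T}$ with exponent $N-k\ge 1$, and the change-of-variables translation $2\Phi(z_0)=\int_{\R}e^{-z_0\phi(v)}\,dv$ and $\int_\R(1-y^2)e^{-z_0y^2}f(y)\,dy=0\Leftrightarrow\int_\R(1-\phi(v))e^{-z_0\phi(v)}\,dv=0$ — are all correct. (As a side remark, the paper's intermediate display has an apparent typo, $e^{(N-k)z_0}$ where $e^{Nz_0}$ is needed for the ratio to normalize when $\varphi\equiv 1$; your bookkeeping avoids this.)
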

\begin{proof}
Let $\varphi$ be a bounded continuous function on $\R^k$. For each fixed $k\in \N$, a modification of Lemma \ref{lemmaZN} shows that, for large $N$,
\begin{equation*}
\begin{split}
  \displaystyle\sum_{\epsilon=+,-}\int_{\sum_{i=1}^{N-1}\phi(v_i)\leq E}&\varphi(v_1,\dots,v_k)\frac{1}{|\frac{\partial H}{\partial v_N}|_\epsilon}dv_1 \dots dv_{N-1}\\
   &\sim \displaystyle \int_{\R^k}\varphi(\phi^{-1}(y_1^2),\dots,\phi^{-1}(y_k^2))\prod_{i=1}^{k}e^{-z_0 y_i^2}f(y_i)dy_1 \dots d y_k\\
   &\hspace{0.7in}\times \frac{2^Ne^{(N-k)z_0}}{\sqrt{NS''(z_0)}}\left ( \int_{-\infty}^{\infty} e^{-z_0 y^2} f(y) dy \right )^{N-k}.
\end{split}
\end{equation*}
Using Lemma \ref{lemmaZN} again and making the change of variable
$y_i^2=\phi(v_i)$, where $dv_i=f(y_i)dy_i$, $i=1,\dots,k$ leads to
\begin{equation*}
\begin{split}
\lim_{N \rightarrow \infty }&\displaystyle \frac{\int_{\Omega^{N-1}(\sqrt{N})}\varphi(v_1,\dots,v_k)
\hspace{0.01in}\hspace{0.01in} d\eta^{(N)}}{Z_{\phi}(\sqrt{N})}\\
&=\frac{\displaystyle \int_{\R^k}\varphi(\phi^{-1}(y_1^2),\dots,\phi^{-1}(y_k^2))\prod_{i=1}^{k}e^{-z_0 y_i^2}f(y_i)dy_1 \dots d y_k}
{\displaystyle \int_{\R^k}\prod_{i=1}^{k}e^{-z_0 y_i^2}f(y_i)dy_1 \dots d y_k}\\
&=\frac{\displaystyle \int_{\R^k}\varphi(v_1,\dots,v_k)\prod_{i=1}^{k}e^{-z_0 \phi(v_i)} dv_1 \dots dv_k}
{\displaystyle \int_{\R^k}\prod_{i=1}^{k}e^{-z_0 \phi(v_i)} dv_1 \dots dv_k}.
\end{split}
\end{equation*}
This is what we wanted to prove.
\end{proof}
We now consider two examples where Theorem \ref{theoremchaos} applies :
\begin{example}
In the classical Kac model $\phi(v)=v^2$. We get that $f(y)=\frac{1}{2}$ and thus, the conditions (\ref{assump0}), (\ref{assump1}) and (\ref{assump2}) are fulfilled.
To determine $z_0$ we need solve the equation
\begin{equation*}
\int_{-\infty}^{\infty}(1-v^2)e^{-z_0 v^2}dv=0.
\end{equation*}
Direct calculations show that $z_0=\frac{1}{2}$ is the unique real solution.
Therefore the uniform density on
($\Omega^{N-1}(\sqrt{N})=\s^{N-1}(\sqrt{N})$) with respect to the
microcanonical measure is $\frac{e^{-{\frac{1}{2}}v^2}}{\sqrt{2\pi}}-$chaotic.
This has already been discussed in Example \ref{exchaoticsph}. Recall that, on
$\s^{N-1}(\sqrt{N})$, the micro\-canonical measure up to a constant is equal to
the surface measure
\end{example}
\begin{example}
For a relativistic energy function $\phi(v)=\sqrt{v^2+1}-1$, it follows that
\begin{equation*}
f(y)=\frac{(y^2+1)|y|}{\sqrt{(y^2+1)^2-1}}.
\end{equation*}
It is easy to check that the conditions (\ref{assump0}), (\ref{assump1}) and
(\ref{assump2}) are satisfied. To find $z_0$ we solve the equation
\begin{equation*}
\int_{-\infty}^{\infty}(1-(\sqrt{v^2+1}-1))e^{-z_0(\sqrt{v^2+1}-1)}dv=0.
\end{equation*}
By using numerical integration, we get that $z_0 \approx 0.734641$. Hence, the
family of uniform densities on $\Omega^{N-1}(\sqrt{N})$ with
$\phi(v)=\sqrt{v^2+1}-1$ in (\ref{spaceOmega}) is
\\
$Ce^{-z_0(\sqrt{v^2+1}-1)}$- chaotic, where $C \approx 4.082$.
\end{example}
We end this section by a comparison between our result and the result of
Sznitman.The setting there is slightly different, and as commented by Sznitman it is not directly applicable for the case $h=v^2$ or the more general situation here.
To see the relation between two results, assume that an individual velocity is distributed according the Gibbs
measure $e^{-\beta \phi(v)}/Z$, where $Z$ is normalization constant and
$\beta>0$. Then the random variable $h=\phi(v)$ that is the energy of a
particle, has a distribution $\mu(dh)=f(h)dh$, where
\begin{equation}
  f(h)=\frac{e^{-\beta h}}{\left|\phi'(\phi^{-1}(h))\right|}.
\end{equation}

To apply Sznitman's result, we need to show that $f$ satisfies condition
(\ref{conditionflambda}). This is a strong integrability condition on $f$ and
depends on the choice of $\phi$. Unfortunately, even for the most classical
case, the one studied by Kac, where $h=v^2$, condition (\ref{conditionflambda})
is not fulfilled since $f'(h)e^{-\beta h}$ is not integrable at $h=0$. It is a
technical condition needed to control the Fourier transform of $f$, and it
could probably be relaxed, but it means that Sznitman's result cannot be
directly applied to our case.

\section{Particle dynamics and master equation for general energy functions}

In this section we follow \cite{cdw} and \cite{kac11} to introduce dynamics
between particles having the energy given by the function $\phi$ and obtain the
corresponding master equation. By similar arguments as in \cite{kac11}, we will
see that the master equation propagates chaos.

To introduce dynamics between the particles, let the \textit {master vector}\\
$\textbf{V}~=(v_1,\dots,v_N)\in \Omega^{N-1}(\sqrt{N})$. The master vector
$\textbf{V}$ makes a jump on $\Omega^{N-1}(\sqrt{N})$ according  the following
steps:
\begin{enumerate}
  \item Pick a pair $(i,j)$, $i<j$ according to the uniform distribution
  $$P_{ij}=\frac{2}{N(N-1)}.$$
  \item The pair of velocities $(v_i,v_j)$ satisfies
  $$\phi(v_i)+\phi(v_j)=h,\hspace{0.2in} h>0.$$
  Let
  $$y_i=\mbox{sign}(v_i)\sqrt{\phi(v_i)} \hspace{0.2in}\text{and}\hspace{0.2in}
      y_j=\mbox{sign}(v_j)\sqrt{\phi(v_j)}.$$ Then $(y_i,y_j)$ is a point on
  the circle, and as in the original Kac model,  the collision may be
  performed there.

  Pick an angle $\theta$ uniformly on $(0,2\pi]$ and let
  $$y_i'(\theta)=y_i\cos\theta+y_j\sin\theta \hspace{0.1in}\text{and}
      \hspace{0.1in} y_j'(\theta)=-y_i\sin\theta+y_i\cos\theta.$$ The pair
  $(v_i,v_j)$ is transformed on $\Omega^1(\sqrt{h})$ to
  $(v_i'(\theta),v_j'(\theta))$ according to
  \begin{align}\label{cols}
  \nonumber v_i'(\theta)&=\mbox{sign}(y_i'(\theta))\phi^{-1}\left(y_i'(\theta)^2\right),\\
   v_j'(\theta)&=\mbox{sign}(y_j'(\theta))\phi^{-1}\left(y_j'(\theta)^2 \right),
  \end{align}
  where
  \begin{equation*}
  \mbox{sign}(v) = \begin{cases} 1, & \mbox{if } v \geq 0 \\ -1, & \mbox{if } v<0. \end{cases}
  \end{equation*}
  Note that
  $$\phi(v_i'(\theta))+\phi(v_j'(\theta))=\phi(v_i)+ \phi(v_j).$$
  \item Update the master vector $\textbf{V}$ and denote the new master
      vector by $T_{i,j}(\theta)\textbf{V}$. By step $2$, it follows that
      $T_{i,j}(\theta)\textbf{V}\in \Omega^{N-1}(\sqrt{N})$. Repeat step $1$,
      $2$ and $3$.
\end{enumerate}
This is only a generalization of the dynamics in \cite{kac11} where $\phi(v)=v^2$.

The steps above describe a random walk on $\Omega^{N-1}(\sqrt{N})$.  As in
\cite{cdw}, we define its Markov transition operator $Q_{\phi}$. If $V_k$ is
the state of the particles after the k-th step of the walk, for a continuous
function $\varphi$ on $\Omega^{N-1}(\sqrt{N})$, the operator $Q_{\phi}$ is
defined by
\begin{equation*}
Q_{\phi}\varphi(y)=\E \left[\varphi(V_{k+1})| V_k=y\right].
\end{equation*}
Writing out the expectation above, we get
\begin{equation}\label{defofQ}
Q_{\phi}\varphi(\textbf{V})=\frac{2}{N(N-1)}\sum_{i<j}
     \int_0^{2\pi}\varphi(T_{i,j}(\theta)\textbf{V})\frac{d\theta}{2\pi}.
\end{equation}
If $F_k$ is probability density of $V_k$ with respect to the  microcanonical
measure $\eta^{(N)}$ on $\Omega^{N-1}(\sqrt{N})$, we have
\begin{equation*}
\begin{split}
\int_{\Omega^{N-1}(\sqrt{N})}&\varphi F_{k+1} d \eta^{(N)}=\E [\varphi(V_{k+1})]=\E \left[ \E[\varphi(V_{k+1})| V_k]\right]\\
&=\int_{\Omega^{N-1}(\sqrt{N})}Q_{\phi}\varphi F_k d \eta^{(N)}.
\end{split}
\end{equation*}
By definition, the microcanonical measure is invariant under the transformation
$\textbf{V} \rightarrow T_{i,j}(\theta)\textbf{V}$. It follows that $Q_{\phi}$
is self adjoint and
\begin{equation*}
F_{k+1}=Q_{\phi}F_k.
\end{equation*}

So far, the process defined above is discrete in time. To obtain a time
continuous process, we let the master vector $\textbf{V}$ be a function of
time, and the times between the jumps (collisions) exponentially distributed.
In this way, if $F_N(\textbf{V},0)$ is the probability distribution of the $N$
particles on $\Omega^{N-1}(\sqrt{N})$ at time $0$, the time evolution of
$F_N(\textbf{V},t)$ is described by the following master equation
\begin{equation}\label{mastereqH}
\frac{\partial F_N(\textbf{V},t) }{\partial t}=\mathcal{K}_{\phi} F_N(\textbf{V},t),
\end{equation}
where
\begin{equation}\label{defofK}
\mathcal{K}_{\phi}=N[Q_{\phi}-I]
\end{equation}
and $I$ is the identity operator. A more complete discussion of master
equations of this kind may be found in \cite{cdw}.

Note that, for $\phi(v)=v^2$, the collision operator $\mathcal{K}_{\phi}$ is
the same as the collision operator $\mathcal{K}$ in the Kac model. We have
propagation of chaos for the master equation (\ref{mastereqH}):

\begin{theorem}
Assume that the family of initial densities $\{F_N(\textbf{V},0)\}_{N\in \N}$ on $\Omega^{N-1}(\sqrt{N})$  is $f(v,0)-$chaotic. Then, the family of densities $\{F_N(\textbf{V},t)\}_{N\in \N}$, where $F_N(\textbf{V},t)$ is the solution to (\ref{mastereqH}) is $f(v,t)-$chaotic. Moreover, the density $f(v,t)$ satisfies the following equation
\begin{equation}
\frac{\partial }{\partial t}f(v,t)= \int_\R \int_{0}^{2\pi}
\left [f(v(\theta),t)f(w(\theta),t)-f(v,t)f(w,t) \right]\ \frac{d\theta}{2\pi} dw,
\end{equation}
with $f(v,0)=f_0(v)$ and $v(\theta)$, $w(\theta)$ given by (\ref{cols}).
\end{theorem}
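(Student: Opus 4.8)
The plan is to reduce this theorem to Kac's original combinatorial argument for propagation of chaos, exploiting the change of variables $y_i=\mathrm{sign}(v_i)\sqrt{\phi(v_i)}$ that was already used in Section~2 to transport integrals on $\Omega^{N-1}(\sqrt{N})$ to integrals on $\s^{N-1}(\sqrt{N})$. First I would observe that under this map the collision step~(\ref{cols}) becomes precisely the classical Kac rotation~(\ref{colvelKac}) in the $y$-variables, and that the microcanonical measure $\eta^{(N)}$ on $\Omega^{N-1}(\sqrt{N})$ pushes forward (up to the Jacobian factor $\prod_i f(y_i)$ computed in~(\ref{functioncy})--(\ref{ZNchangevar})) to a measure on $\s^{N-1}(\sqrt{N})$ that is absolutely continuous with respect to the surface measure. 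Since the operator $Q_\phi$ is, by construction in step~2 of the dynamics, conjugate to the classical Kac operator $Q$ through this change of variables, the master equation~(\ref{mastereqH}) is conjugate to Kac's master equation~(\ref{mastereq_Kac}), and the Boltzmann-Kac equation in the $v$-variable is conjugate to the classical Boltzmann-Kac equation in the $y$-variable.

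The key steps, in order, would be: (i) Show that if $\{F_N(\mathbf{V},0)\}$ is $f_0(v)$-chaotic on $\Omega^{N-1}(\sqrt{N})$, then the corresponding densities on $\s^{N-1}(\sqrt{N})$ obtained by the change of variables (dividing out the Jacobian $\prod f(y_i)$) form a family that is $g_0$-chaotic, where $g_0$ is the pushforward of $f_0$; here one must track that marginals transform correctly, which is routine because the change of variables acts coordinate-wise. (ii) Apply Kac's theorem (quoted in the excerpt) that the classical master equation propagates chaos, to conclude that the time-evolved densities on $\s^{N-1}(\sqrt{N})$ are $g(y,t)$-chaotic, with $g(y,t)$ solving the classical Boltzmann-Kac equation. (iii) Pull everything back: show that $\{F_N(\mathbf{V},t)\}$ is $f(v,t)$-chaotic, where $f(v,t)\,dv = g(y,t)\,dy$ under $y=\mathrm{sign}(v)\sqrt{\phi(v)}$, i.e. $f(v,t)=g(\mathrm{sign}(v)\sqrt{\phi(v)},t)\,|\phi'(v)|/(2\sqrt{\phi(v)})$. (iv) Verify that this $f(v,t)$ satisfies the stated limiting equation, by substituting the change of variables into the classical Boltzmann-Kac equation and checking that the collision integral transforms into $\int_\R\int_0^{2\pi}[f(v(\theta),t)f(w(\theta),t)-f(v,t)f(w,t)]\,\frac{d\theta}{2\pi}\,dw$ with $v(\theta),w(\theta)$ given by~(\ref{cols}); this amounts to the identity $g(y(\theta))g(w(\theta))\,\frac{\partial(y,w)}{\partial(v,w)}=\dots$, a direct consequence of the fact that the collision~(\ref{cols}) is a measure-preserving rotation in the $y$-coordinates on $\Omega^1(\sqrt{h})$.

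Alternatively, if one prefers not to invoke Kac's combinatorial proof as a black box but to mimic it, the argument runs through the same two ingredients Kac used: the self-adjointness and positivity-preserving property of $Q_\phi$ (established just above the theorem statement via invariance of the microcanonical measure under $T_{i,j}(\theta)$), plus an expansion of the iterated operator $Q_\phi^n$ showing that, acting on a chaotic sequence, the mixed ``collision histories'' involving a fixed collision pair more than once contribute $O(1/N)$. The bookkeeping is identical to~\cite{kac11} once one notes that the per-collision kernel in the $y$-variables is exactly the uniform measure on a circle, so no new estimates are required; one just carries along the factor $\prod f(y_i)$, which is harmless since $f$ is bounded on compacts and the conditions~(\ref{expcondf}), (\ref{assump1}), (\ref{assump2}) from Lemma~\ref{lemmaZN} ensure the relevant normalizing constants behave as in Section~2.

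I expect the main obstacle to be step~(iii)--(iv): making the pullback of chaos fully rigorous. The subtlety is that the change of variables $v\mapsto y$ is not globally smooth (it has a singularity at $v=0$ where $\phi'(0)=0$ since $\phi$ is even and convex with $\phi(0)=0$), so the Jacobian $\prod f(y_i)$ blows up near the coordinate hyperplanes; one must check that the weak convergence of marginals survives against bounded continuous test functions after multiplying and dividing by this singular factor, which requires an integrability argument (the functions $f(y)$ and $e^{-z_0 y^2}f(y)$ are integrable, as used in Section~2, so the singularities are mild). Everything else is conjugation of linear evolution equations and therefore essentially formal.
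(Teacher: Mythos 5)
Your conjugation idea is genuinely different from the paper's proof: the paper simply cites Kac \cite{kac11} and Carlen--Degond--Wennberg \cite{cdw}, i.e.\ redoing Kac's combinatorial expansion of $e^{t\mathcal{K}_\phi}$ in the $v$-variables, whereas you transport the entire process to $\s^{N-1}(\sqrt{N})$ through $y_i=\mathrm{sign}(v_i)\sqrt{\phi(v_i)}$ and then invoke Kac's theorem as a black box. That is conceptually cleaner, and your observation that $Q_\phi(\psi\circ\Psi)=(Q\psi)\circ\Psi$ (with $\Psi$ the coordinate-wise change of variables) is correct. Steps (i)--(ii) are essentially fine, modulo the mild singularity of $f$ at $y=0$ which you rightly flag.

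The real gap is in step (iv), and it is not the harmless technicality you suggest. The collision $T_{i,j}(\theta)$ is a rotation in the $y$-variables, but in the $v$-variables it is \emph{not} Lebesgue-measure-preserving: its Jacobian is $\dfrac{f(y_i'(\theta))f(y_j'(\theta))}{f(y_i)f(y_j)}$, with $f$ as in (\ref{functioncy}), and this is identically $1$ only when $f$ is constant, i.e.\ for $\phi(v)=cv^2$. Consequently, when you pull back Kac's Boltzmann--Kac equation via $g(y,t)\,dy=f(v,t)\,dv$, the gain term does not become $f(v(\theta),t)\,f(w(\theta),t)$; you instead obtain
\begin{equation*}
\frac{\partial}{\partial t}f(v,t)=\int_{\R}\int_0^{2\pi}\Bigl[f(v(\theta),t)\,f(w(\theta),t)\,\frac{J(y(\theta))\,J(w(\theta))}{J(y)\,J(w)}-f(v,t)\,f(w,t)\Bigr]\frac{d\theta}{2\pi}\,dw,
\end{equation*}
with $J$ proportional to $f$. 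So your pullback $f$ does not solve the equation asserted in the theorem unless the Jacobian factor is trivially $1$. It is worth noting that this is precisely the spot where the paper itself is on shaky ground: the assertion, just before the theorem, that the microcanonical measure is invariant under $T_{i,j}(\theta)$ and hence that $Q_\phi$ is self-adjoint relies on the same identity $f(y_i')f(y_j')=f(y_i)f(y_j)$, which fails for general $\phi$ (and in particular for the relativistic $\phi$). Your conjugation exposes this: it gives propagation of chaos, but with the Jacobian-corrected limiting equation above, not the one stated; the gap between the two must be addressed before either proof can be considered complete.
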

\begin{proof}
The proof follows by the same arguments as in \cite{kac11}. For a more detailed
proof where propagation of chaos is shown for more general master equations, we
refer to \cite{cdw}.
\end{proof}

\section{Chaotic measures in higher dimensions}
In Section~$2$ we proved that the uniform density with respect to the microcanonical
measure on $\Omega^{N-1}(\sqrt{N})$ is $Ce^{-z_0\phi(v)}$-chaotic, $v\in \R$.
The goal of this section is to generalize this to the case $v\in \R^2$ where
now both the energy and momentum are conserved; the generalization to $\R^d$,
$d>2$ may be treated in the same way. The calculations here are formal. For
$p\in \R^2$, define
\begin{equation}
\Gamma^N(\sqrt{E},p)=
\left \{(v_1,\dots,v_N)\in \R^{2N}\;\Big | \;\sum_{i=1}^{N}\phi(v_i)=2E,\; \sum_{i=1}^{N}v_i=p \right\}.
\end{equation}
We assume that $E$ and $p$ are chosen such that $\Gamma^N(\sqrt{E},p)$ is
non-empty. The classical case when $\phi(v)=|v|^2$ has been thoroughly
investigated in \cite{ck}.

For $p=(p_1,p_2)$ and $v_i=(v_{i1},v_{i2})$, $i=1,\dots N$, a measure
$\mu_{E,p_1,p_2}$ concentrated on $\Gamma^N(\sqrt{E},p)$ is defined by
\begin{equation}
\mu_{E,p_1,p_2}=\delta (2E - \sum_{i=1}^{N}\phi(v_i))\delta (p_1 - \sum_{i=1}^{N}v_{i1})
\delta (p_2 - \sum_{i=1}^{N}v_{i2}).
\end{equation}
The product of the Dirac measures is well defined since the hyper surfaces
defined by setting the arguments of the Dirac measures to zero are mutually
transversal. Let
\begin{equation}
Z(E,p_1,p_2)=\displaystyle\int_{\R^{2N}}\delta (2E - \sum_{i=1}^{N}\phi(v_i))
\delta (p_1 - \sum_{i=1}^{N}v_{i1})
\delta (p_2 - \sum_{i=1}^{N}v_{i2}) dv_1 \dots dv_N.
\end{equation}

As in Section $2$, we need to determine the asymptotic behaviour of
$Z(E,p_1,p_2)$ with $E=N$ for large $N$. We note that in the case of
relativistic collisions, with $\phi(v)=\sqrt{|v|^2+1}-1$, the measure
$\mu_{E,p_1,p_2}$ is Lorentz invariant (see \cite{robert}).

In the sense of distributions, the $\delta$ function is the inverse Fourier
transform of the function $1$ and formally can be written as
\begin{equation}
\delta(x)=\frac{1}{2\pi}\int_\R e^{i x \xi}d\xi.
\end{equation}
For $z=(z_1,z_2,z_3)$, we can formally write

\begin{equation*}
\begin{split}
&Z(E,p_1,p_2)\\
&=\frac{1}{(2\pi)^3}\displaystyle \int_{\R^{2N}}\int_{\R^3}e^{i(2E - \sum_{i=1}^{N}\phi(v_i))z_3}
e^{i(p_1 - \sum_{i=1}^{N}v_{i1})z_1}e^{i(p_2 - \sum_{i=1}^{N}v_{i2})z_2} dz_1 dz_2 dz_3 \ dV,
\end{split}
\end{equation*}
where $dV =dv_1 \dots dv_N$. With $E=N$, the last equality is
\begin{equation*}
\begin{split}
Z&(N,p_1,p_2)=\\
&\frac{1}{(2\pi)^3}\displaystyle \int_{R^3}e^{ip_1 z_1+ip_2 z_2}
\left(e^{2i z_3}\int_{\R^2} e^{-i\phi(v_1) z_3-iv_{11} z_1-iv_{12} z_2} dv_{11}dv_{12}\right)^N dz_1 dz_2 dz_3.
\end{split}
\end{equation*}
Note that here $p_1$ and $p_2$ are assumed to be independent of $N$. A natural
and straightforward variation is to replace $p_1$ by $Np_1$ and $p_2$ by$Np_2$.\\

Let
\begin{equation}\label{Sdim3}
\begin{split}
&q(z_1 , z_2 , z_3)=e^{ip_1 z_1 +ip_2 z_2},\\
&S(z_1 , z_2 , z_3)=2i z_3+ \log \int_{\R^2}e^{-i\phi(v_1) z_3 - iv_{11} z_1-iv_{12} z_2} dv_{11}dv_{12},
\end{split}
\end{equation}
where $q:\C^3\rightarrow \R$, $S:\C^3\rightarrow \R$. With these notations,  we
can write
\begin{equation}\label{Zsadelmult}
Z(N,p_1,p_2)=\frac{1}{(2\pi)^3}\displaystyle \int_{\R^3}q(z_1 , z_2 , z_3)e^{NS(z_1 , z_2 , z_3)}dz_1 dz_2 dz_3.
\end{equation}
The right hand side of the last equality is a saddle point integral in
dimension~3. The asymptotic behaviour of $Z(N,p_1,p_2)$ for large $N$ is
determined by the saddle points of $S(z_1 , z_2 , z_3)$, i.e., points $(\bar
z_1, \bar z_2, \bar z_3)$ such that
\begin{equation}\label{sadelcond}
\nabla S(\bar z_1,\bar z_2,\bar z_3)=0.
\end{equation}
Using (\ref{Sdim3}), we need to solve the following system of equations:
\begin{eqnarray}
  2i-\frac{i\int_{\R^2}\phi(v_1)e^{-i\phi(v_1)\bar z_3-iv_{11} \bar z_1-iv_{12}\bar z_2} dv_{11}dv_{12}}{\int_{\R^2}e^{-i\phi(v_1)\bar z_3-iv_{11}\bar z_1-iv_{12}\bar z_2} dv_{11}dv_{12}} &=& 0, \\
\label{eq2}  -\frac{i\int_{\R^2}v_{11}e^{-i\phi(v_1)\bar z_3-iv_{11}\bar z_1-iv_{12}\bar z_2} dv_{11}dv_{12}}{\int_{\R^2}e^{-i\phi(v_1)\bar z_3-iv_{11}\bar z_1-iv_{12}\bar z_2} dv_{11}dv_{12}} &=& 0, \\
\label{eq3}  -\frac{i\int_{\R^2}v_{12}e^{-i\phi(v_1)\bar z_3-iv_{11}\bar z_1-iv_{12}\bar z_2} dv_{11}dv_{12}}{\int_{\R^2}e^{-i\phi(v_1)\bar z_3 -iv_{11}\bar z_1-iv_{12}\bar z_2} dv_{11}dv_{12}} &=& 0.
\end{eqnarray}
Since $\phi$ is even, it follows that $\bar z_1,\bar z_2=0$ are the unique solutions to equations (\ref{eq2}) and (\ref{eq3}). We can now obtain $\bar z_3$ by solving the following equation:
\begin{equation}\label{eqxi}
\int_{\R^2}(2-\phi(v_1))e^{-i\phi(v_1)\bar z_3} dv_1=0.
\end{equation}
Assuming that $(0,0, \bar z_3)$ is the unique solution to (\ref{sadelcond}), and that we can deform the integration domain in (\ref{Zsadelmult}) to contain $(0,0, \bar z_3)$, we find that
\begin{equation}
\begin{split}
Z(N,p_1,p_2)& \sim \frac{1}{(2\pi)^3}\left(\frac{2\pi}{N} \right)^{3/2}\frac{1}{(\mbox{det} S''((0,0, \bar z_3)))^{1/2}}
e^{NS(0,0, \bar z_3)}q(0,0, \bar z_3)\\
&=\frac{1}{(2\pi N)^{3/2}}\frac{1}{(\mbox{det} S''((0,0, \bar z_3)))^{1/2}} e^{2Ni\bar z_3}
\left(\int_{\R^2}e^{-i\phi(v_1)\bar z_3}dv_1\right)^N.
\end{split}
\end{equation}
By the discussion in Section $2$, this procedure shows formally that the uniform distributions on $\Gamma^N(\sqrt{N},p)$ with respect to the measure $\mu_{N,p_1,p_2}$ is $Ce^{-z_0\phi(v)}$-chaotic, where $z_0=i\bar z_3$ and $\bar z_3$ the unique solution to (\ref{eqxi}) and
$$C=\displaystyle\frac{1}{\int_{\R^2}e^{-z_0 \phi(v)}dv}.$$

Sznitman's method, referred to in Section~\ref{sec:intro} and
the end of Section~\ref{sec:chaotic}, is not restricted to the one-dimensional
setting, but could formally be used also here: with $N$ spatial dimensions, we
would have $d=N+1$, and $h$ in Equations~(\ref{conditionflambda}) and
(\ref{eq:sz14}) would be a generalized four-momentum, $h=(v_1,v_2,...,v_N,
\phi(v_1,...,v_N) )$. However, the same difficulty as in the one-dimensional
case would appear here, and a modification of Sznitman's argument would be
needed, in the same way.


\section{Appendix}
The saddle point method is used to determine the asymptotic behaviour of integrals depending on a parameter. For a detailed description we refer to \cite{rvg}. Without proof we only state below the saddle point method which is concerned with this paper.\\
\textbf{One-dimensional saddle point method}\\
Let $\gamma$ be a contour in the complex plane. Assume that $q$ and $S$ are analytic functions in a neighborhood of the contour $\gamma$. Consider the following integral
\begin{equation}\label{saddlepint}
F(\lambda) =\int_{\gamma} q(z)e^{\lambda S(z)}dz.
\end{equation}
A point $z_0\in \C$ is called a simple \textit{saddle point } of the function $S:\C\rightarrow \C$ if $S'(z_0)=0$ and $S''(z_0)\neq 0$.
Assume that $z_0 \in \gamma$ is the unique simple saddle point of $S$. Then as $\lambda\rightarrow \infty$
\begin{equation}
F(\lambda)=\sqrt{\frac{-2\pi}{\lambda S''(z_0)}}e^{\lambda S(z_0)}\left (q(z_0)+\Ordo(\frac{1}{\lambda}) \right ).
\end{equation}
If there are more than one saddle point, $F$ will be expressed as a sum over these points.\\
\textbf{Many-dimensional saddle point method }\\
Let $\gamma$ be an $N$-dimensional smooth compact manifold. Consider the following integral
\begin{equation}
F(\lambda) =\int_{\gamma} q(z)e^{\lambda S(z)}dz.
\end{equation}
where $z=(z_1,\dots,z_N)\in \C^N$ and the functions $q(z)$ and $S(z)$ are assumed to be analytic in a domain $D$ containing $\gamma$. A point $z_0$ is called a simple saddle point of $S(z)$ if $\nabla S(z_0)=0$ and
$det S''(z_0)\neq 0$. Assume that $z_0 \in \gamma$ is the unique simple saddle point of $S$. Then as $\lambda\rightarrow \infty$
\begin{equation}
F(\lambda)=\left(\frac{2\pi}{\lambda}\right)^{N/2}\frac{1}{(\mbox{det} S''(z_0))^{1/2}}e^{\lambda S(z_0)}\left (q(z_0)+\Ordo(\frac{1}{\lambda})
\right ).
\end{equation}
If there are more than one saddle point, $F$ is be expressed as a sum over these points.
\section*{Acknowledgments}
The authors would like to thank one of the referees for several suggestions.
D.M. acknowledges support by the Swedish Science Council. B.W. acknowledges
support by the Swedish Science Council, the Knut and Alice Wallenberg
foundation and the Swedish Foundation for Strategic Research.

\section*{References}
\begin{bibdiv}
\begin{biblist}

\bib{cclrv}{article}{
   author={Carlen, Eric A.},
   author={Carvalho, Maria C.},
   author={Le Roux, Jonathan},
   author={Loss, Michael},
   author={Villani, Cedric},
   title={Entropy and chaos in the Kac model},
   journal={Kinet. Relat. Models},
   volume={3},
   date={2010},
   number={1},
   pages={85--122},
}

\bib{cdw}{article}{
   author={Carlen, Eric},
   author={Degond, Pierre},
   author={Wennberg, Bernt},
   title={Kinetic limits for pair-interaction driven master equations and
   biological swarm models},
   journal={Math. Models Methods Appl. Sci.},
   volume={23},
   date={2013},
   number={7},
   pages={1339--1376},
}

\bib{ck}{article}{
   author={Carrapatoso, Kleber},
   title={Quantitative and Qualitative Kac's Chaos on the Boltzmann Sphere. arXiv:1205.1241.},
}

\bib{relbook}{book}{
   author={Cercignani, Carlo},
   author={Kremer, Gilberto Medeiros},
   title={The relativistic Boltzmann equation: theory and applications},
   series={Progress in Mathematical Physics},
   volume={22},
   publisher={Birkh\"auser Verlag, Basel},
   date={2002},
   pages={x+384},
}

\bib{ChangPollard1997}{article}{
   author={Chang, J. T.},
   author={Pollard, D.},
   title={Conditioning as disintegration},
   journal={Statist. Neerlandica},
   volume={51},
   date={1997},
   number={3},
   pages={287--317},
}

\bib{EvansGariepy2015}{book}{
   author={Evans, Lawrence C.},
   author={Gariepy, Ronald F.},
   title={Measure theory and fine properties of functions},
   series={Textbooks in Mathematics},
   edition={Revised edition},
   publisher={CRC Press, Boca Raton, FL},
   date={2015},
   pages={xiv+299},
}

\bib{Gallgher_etal2013}{book}{
   author={Gallagher, Isabelle},
   author={Saint-Raymond, Laure},
   author={Texier, Benjamin},
   title={From Newton to Boltzmann: hard spheres and short-range potentials},
   series={Zurich Lectures in Advanced Mathematics},
   publisher={European Mathematical Society (EMS), Z\"urich},
   date={2013},
   pages={xii+137},
}

\bib{rvg}{collection}{
   author={Gamkrelidze. R. V}
   title={Analysis. I},
   series={Encyclopaedia of Mathematical Sciences},
   volume={13},
   note={Integral representations and asymptotic methods;
   A translation of Sovremennye problemy matematiki. Fundamentalnye
   napravleniya, Tom 13, Akad. Nauk SSSR, Vsesoyuz. Inst. Nauchn. i Tekhn.
   Inform., Moscow, 1986 [ MR0899751 (88c:00008)];
   Translation by D. Newton;
   Translation edited by R. V. Gamkrelidze},
   publisher={Springer-Verlag, Berlin},
   date={1989},
   pages={vi+238},
}

\bib{kac11}{article}{
   author={Kac, Mark},
   title={Foundations of kinetic theory},
   conference={
      title={Proceedings of the Third Berkeley Symposium on Mathematical
      Statistics and Probability, 1954--1955, vol. III},
   },
   book={
      publisher={University of California Press, Berkeley and Los Angeles},
   },
   date={1956},
   pages={171--197},
}

\bib{lanford111}{article}{
   author={Lanford, Oscar E., III},
   title={Time evolution of large classical systems},
   conference={
      title={Dynamical systems, theory and applications},
      address={Rencontres, Battelle Res. Inst., Seattle, Wash.},
      date={1974},
   },
   book={
      publisher={Springer, Berlin},
   },
   date={1975},
   pages={1--111. Lecture Notes in Phys., Vol. 38},
}

\bib{lrs}{book}{
   author={Leli{\`e}vre, Tony},
   author={Rousset, Mathias},
   author={Stoltz, Gabriel},
   title={Free energy computations},
   note={A mathematical perspective},
   publisher={Imperial College Press, London},
   date={2010},
   isbn={978-1-84816-247-1},
}

\bib{pss}{article}{
   author={Pulvirenti, M.},
   author={Saffirio, C.},
   author={Simonella, S.},
   title={On the validity of the Boltzmann equation for short range
   potentials},
   journal={Rev. Math. Phys.},
   volume={26},
   date={2014},
   number={2},
   pages={1450001, 64},
}

\bib{sznitman111}{article}{
   author={Sznitman, Alain-Sol},
   title={Topics in propagation of chaos},
   conference={
      title={\'Ecole d'\'Et\'e de Probabilit\'es de Saint-Flour XIX---1989},
   },
   book={
      series={Lecture Notes in Math.},
      volume={1464},
      publisher={Springer, Berlin},
   },
   date={1991},
   pages={165--251},
}

\bib{robert}{article}{
   author={Strain, Robert M.},
   title={Coordinates in the relativistic Boltzmann theory},
   journal={Kinet. Relat. Models},
   volume={4},
   date={2011},
   number={1},
   pages={345--359},
}

\bib{StrainYun2014}{article}{
   author={Strain, Robert M.},
   author={Yun, Seok-Bae},
   title={Spatially homogeneous Boltzmann equation for relativistic
   particles},
   journal={SIAM J. Math. Anal.},
   volume={46},
   date={2014},
   number={1},
   pages={917--938},
}

\bib{TKS1992}{book}{
   author={Toda, M.},
   author={Kubo, R.},
   author={Sait{\^o}, N.},
   title={Statistical physics. I. Equilibrium statistical mechanics},
   series={Springer Series in Solid-State Sciences},
   volume={30},
   edition={2},
   publisher={Springer-Verlag, Berlin},
   date={1992},
   pages={xvi+252},
   isbn={3-540-53662-0},
   review={\MR{1301079}},
}

\end{biblist}
\end{bibdiv}

\end{document}